\DeclareMathOperator{\EX}{\mathbb{E}}
\DeclareMathOperator{\polylog}{polylog}
\begin{document}
\title{LightSync: Ultra Light Client for PoW Blockchains}
\newtheorem{lem}{Lemma}
\newtheorem{thm}{Theorem}
\author{\IEEEauthorblockN{Niusha Moshrefi\textsuperscript{\textsection}}
\IEEEauthorblockA{School of Engineering\\
University of British Columbia\\
Okanagan Campus, Canada\\
niushamo@mail.ubc.ca}
\and
\IEEEauthorblockN{Mahyar Daneshpajooh\textsuperscript{\textsection}}
\IEEEauthorblockA{School of Engineering\\
University of British Columbia\\
Okanagan Campus, Canada\\
mahyard@mail.ubc.ca}
\footnotemark
\and
\IEEEauthorblockN{Chen Feng}
\IEEEauthorblockA{School of Engineering\\
University of British Columbia\\
Okanagan Campus, Canada\\
chen.feng@ubc.ca}}

\maketitle

\begin{abstract}
 Full nodes in a blockchain network store and verify a copy of the whole blockchain. Unlike full nodes, light clients are low-capacity devices that want to validate certain data on a blockchain. They query the data they want from a full node. If light clients do not verify the data they receive, full nodes might deceive them. SPV, introduced in the Bitcoin paper, is a practical solution to this problem currently used in many PoW blockchains.
In SPV, the resources needed to verify a full node's response grow linearly with the blockchain size, making it inefficient over the long run. Another issue with SPV is that the full nodes do not get compensated for the services they provide.

In this work, we introduce \textsc{LightSync}, a \emph{simple} and \emph{cost-effective} solution for light clients to verify the inclusion of certain data in a PoW blockchain. The resources needed for running \textsc{LightSync} remain \emph{constant} no matter what the size of the blockchain is.  \textsc{LightSync} uses an incentive mechanism that encourages full nodes to participate in the protocol. We perform a thorough analysis of the security of \textsc{LightSync} and discuss the details of deploying it in a real-world environment.
\end{abstract}

\IEEEpeerreviewmaketitle

\section{Introduction}\label{sec:introduction}
    In a blockchain network, full nodes are nodes that maintain a copy of the whole blockchain and relay data in the network. They need to have high storage, bandwidth, and computation power since they need to store all the blocks and verify newly mined blocks to add them to their local chain. For example, currently, Ethereum's \cite{Ethereum} blockchain size is roughly 870 gigabytes \cite{ethsize}. On the other hand, some clients have access to limited resources, but need instant access to certain data on a blockchain. Nowadays, there exist many such clients such as smartphones, IoT devices, and wearable devices who want to get data from a blockchain without storing and verifying the whole blockchain. These low-capacity clients are called \textit{light clients}. In case that a light client wants to verify data from multiple chains, the need for a cost-effective solution becomes more crucial.
    
    Another concern in today's blockchain ecosystem is the interoperability of the different blockchains \cite{interop, Zamyatin2019SoKCA,belchior2021survey}. Recently, the topic of cross-chain applications has attracted wide attention \cite{btcrelay, rainbow, gavzi2019proof}. In order to enable blockchains to embed a client of another blockchain to facilitate cross-chain transactions and cross-chain data transfer, there is a need for an efficient solution for light clients.
    
    One solution for the light clients who need access to the blockchain data is to send their queries to a trusted third party (who maintains the blockchain) and get the required data. However, this approach contradicts the very concept of decentralization in blockchains.
    Thus, a light client needs an efficient solution to obtain information from any full node and a proof to confirm its validity, inclusion in the blockchain, and finalization.
    Two important problems need to be tackled here. First, the cost of the solution in terms of the required storage and the computation power should be minimized. Second, an incentive mechanism for the full nodes to provide the data and the proof should be designed, especially with massive inquiries from light clients.
    
	\subsection{Related Work}\label{sec:related-work}
    The original Bitcoin paper \cite{Bitcoin} has introduced \textit{SPV clients}, who are light clients performing a Simple Payment Verification by storing and verifying all the block headers of a blockchain. This approach is cheaper than storing all the blocks of the blockchain and acting as a full node. However, it is still costly to store and verify all the block headers. The cost of this solution grows linearly with the size of the blockchain. 
    
    Recently, two sublinear solutions have been proposed to reduce the cost of light clients further. NIPoPoW \cite{NiPoPoWs} and FlyClient \cite{FlyClient} use several randomly chosen blocks from the entire blockchain to verify the validity of a predicate based on that blockchain. 
    Compared to SPV, they have a lower cost for verification; the user is required to download a polylogarithmic fraction of blocks to store and verify. 
	
	NIPoPoW chooses the high-difficulty blocks, as they happen less frequently than other blocks, to be verified as a sub-chain of the underlying chain. Verifying the validity of those blocks proves that enough Proof-of-Work (PoW) for the underlying chain has been provided with a high probability. FlyClient has an optimized approach in which a binary search and a random sampling are conducted to find the invalid blocks of an invalid chain. This way, the algorithm detects and discards invalid proofs. 
	
	Both of the solutions use velvet forks \cite{velvetfork} to include additional structures in the block headers of the source blockchain. NIPoPoW requires an interlink pointers data structure to be included in the block headers and FlyClient requires all block headers to include the root of a Merkle Mountain Range (MMR) \cite{mmr} commitment of all previous block headers.
	However, both solutions suffer from several drawbacks. NIPoPoW only works well when the difficulty of the blockchain stays constant which is an unrealistic assumption.
	On the other hand, FlyClient has not conducted a comprehensive analysis on the security of its protocol under velvet fork deployment. Their protocol is exposed to chain-sewing attacks under such conditions. Nemoz and Zamyatin \cite{cryptoeprint:2021:782} have introduced this attack and proposed some solutions to fix it.

    \subsection{Our Contribution}\label{sec: our-contribution} 
    We propose \textsc{LightSync}, a new protocol for light clients in PoW blockchains. The storage and computation power needed for the users in \textsc{LightSync} do not grow with the length of the blockchain and remain constant. Thus, \textsc{LightSync} dramatically reduces the resources that light clients need for confirming a predicate from a PoW blockchain. Especially for the light clients who want to confirm data from a large or a fast-growing blockchain, \textsc{LightSync} is a promising solution to reduce the cost. For example, the current length of the Ethereum blockchain is roughly 13 Million blocks. Assume the adversary mining power to the honest mining power to be $\frac{1}{2}$ and the adversary miners to use the same target difficulty as honest miners. In case that the blockchain includes MMR root in the block header structure, to achieve the failure probability less than $2^{-20}$, the expected number of block headers for \textsc{LightSync}'s proof will be $140$. In Table~\ref{tab_size}, we demonstrate the resulted improvement in proof size in comparison to SPV.
    
    \begin{table}[!t]
    \caption{Proof size (in MB) of SPV and LightSync for Ethereum blockchain}
    \label{tab_size}
    \centering
    \begin{tabular}{|c|c|c|c|}
    \hline
    \textbf{Blockchain length} & \textbf{1 M} & \textbf{10 M} & \textbf{100 M}  \\ 
    \hline
    SPV & 508 & 5,080 & 50,800\\ 
    \hline
    LightSync & 0.076 & 0.076 & 0.076\\ 
    \hline
    \end{tabular}
    \end{table}
    
    Our solution is simple with realistic assumptions that make it easily applicable to the current blockchains. Moreover, \textsc{LightSync} employs an incentive mechanism that encourages full nodes to participate in the protocol. We have also analyzed the security of the protocol thoroughly \cite{security} and considered all the deployment issues. Therefore, \textsc{LightSync} is the first incentivized protocol for light clients that has the minimum cost for its users which is provably secure.
    

    
\section{Models and Problem Formulation}\label{sec:model-and-problem-definition}

 \begin{figure*}[!b]
  \includegraphics[scale=0.5]{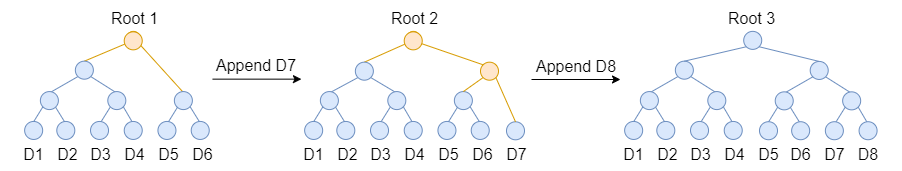}
  \centering
  \caption{Merkle Mountain Range}
  \label{fig1}
 \end{figure*}
 
 \subsection{Backbone Model} \label{sec:backbone-model}
 Our model is based on the standard backbone model for PoW blockchains \cite{10.1007/978-3-319-63688-7_10}. This model consists of three main players: \textit{full nodes}, \textit{miners} and \textit{light clients}. Full nodes maintain a copy of the whole blockchain. They validate every data they receive. Also, some full nodes relay data through the network. Miners create new blocks extending the longest chain of the blockchain and commit new transactions they receive from the clients. Miners compete to become the next block proposer by solving a puzzle which is known as \textit{Proof-of-Work}. For getting blocks and transactions, miners rely on full nodes. Clients need up-to-date information from the blockchain. For example, when they want to perform a transaction inclusion verification, they need the block header of the desired block, in addition to a proof for the transaction inclusion in that block header. Light clients are clients who utilize fewer resources such as storage and computation. They send queries to some full nodes to obtain their desired data instead of paying for the cost of maintaining the whole blockchain.
 
 Block creation is captured in the random oracle model as in \cite{10.1007/978-3-319-63688-7_10}. This model has a security parameter $\sigma$. In each mining round, a miner sends $q$ queries to the random oracle function $H(.)$. For each query, if the value has not been queried before, the function returns a random value from ${\{0,1\}}^\sigma$ and stores the input and output values in a table. If the value has been queried before, the function finds the recorded output from the table and returns it.
 
 For a block to be valid, it needs to have a proof to show enough work has been done on that block. Each block $B$ has a target difficulty $T$. If $H(B) < T$, we say the block meets its difficulty requirement, therefore, it has a sufficient amount of work. The target difficulty $T$ is set in a way that the block intervals remain constant in expectation. We call the expectation of block intervals \textit{Block Interval Time}. The parameter $\lambda$ is the arrival rate of new blocks, which is equal to the inverse of the block interval time. In our model, the target difficulty of the blocks can be constant or variable. Based on the standard backbone model, the active mining power in the network is proportional to $\lambda/T$.

 We abstract $B$ to contain the block header information. A valid block that has a valid PoW and extends the longest chain will be propagated by honest nodes and with a high probability will get finalized in the blockchain. A block gets finalized when it is buried under a chain of at least $k$ other blocks, where $k$ is the \textit{finality parameter}. The finality parameter $k$ is tuned in a way that the probability of a finalized block slipping out of the longest chain is made negligible. The finality parameter differs from one blockchain to another. For example, for the Bitcoin ledger, we know that $k=6$.

 Each block has a block header. Every block header contains a block number, the hash of the previous block as a pointer to it, the target difficulty $T$, the Merkle root of all transactions, and some other information related to the block. The size of block headers is much smaller than the size of the blocks themselves.
 
 When a miner receives a new block, that block may extend some block other than the last one and cause a fork in that miner's local blockchain. Honest nodes will follow the longest chain rule. They choose the longest valid chain and start mining their new block on it. In a valid chain, all the blocks meet their difficulty requirement. We assume more than half of the mining power in the network is honest. Any blockchain that an honest node maintains is called an \textit{honest chain}.
 
 \subsection{Chain Commitment}\label{sec:chain-commitment}
 We leverage the notion of \textit{Merkle Mountain Range (MMR)} in our protocol to enable verifying data from any previous block header of the blockchain. MMR is a more efficient variant of Merkle trees \cite{merkle}. For the MMR root of each block, the MMR leaves are the block headers of all previous blocks. Just like Merkle trees, parent nodes in MMR are hash values of their two children. Using MMR, the entire blockchain can be committed into a single hash value. Also, the proof of the inclusion of a block in the MMR tree is of logarithmic size. MMR uses an efficient updating process to append new leaves and update its root. Each block's MMR root can be constructed from the previous block's MMR root with a low computation overhead. It is also capable of efficiently removing the last leaf from the structure. Moreover, the consistency of the MMR root of a past block with the MMR root of a more recent block can be shown easily. The process of adding a leaf to an MMR tree is shown in Figure \ref{fig1}. Since an MMR tree might have more than one peak, the MMR root is constructed from bagged peaks.
 
 For blockchains that do not have MMR root in their block header structure, an upgrade to the consensus layer is required for the structure to include MMR root in it. One possible way for upgrading is to do a hard fork or a soft fork \cite{hardsoft}. If doing such a fork is not possible, a velvet fork, introduced by Kiayias et al. in \cite{NiPoPoWs}, can be used to add MMR root to the block structure. The changes they propose in the velvet fork need no rule modifications to the consensus layer. Honest miners will be divided into two groups of upgraded and non-upgraded honest miners. The upgraded miners are required to include the MMR root of the previous blocks in their coinbase data, however, non-upgraded miners just ignore that data as comments. Coinbase is the first transaction of a block, determined by the block's miner. Both types of miners accept previously valid blocks regardless of whether they have upgraded structures or not. 
 
 In the velvet fork, the set of accepted blocks is not modified, therefore, there may exist some upgraded blocks containing an invalid MMR root. We assume that the majority of the upgraded miners are honest. We add a voting mechanism so that valid MMR roots can be detected. To enable the voting, we require every upgraded miner to include a buffer of length $\alpha$ in their coinbase data as well as the MMR root. Later, we will define $\alpha$ in a way for the protocol to be secure and discuss it. Every upgraded block has a flag in its coinbase data so that it can be recognized from the non-upgraded blocks.
 
 \subsection{The Prover and Verifier Model}\label{sec:the-prover-and-verifier-model}
 To verify some data being included in the blockchain without having to maintain the whole blockchain, a client needs to send some queries to full nodes. We call the light client a \textit{verifier} and the full nodes \textit{prover}s. We assume that the verifier doesn't trust any of the provers and is attempting to find a prover to get some finalized blocks of the honest chain. This way, she can perform a state or transaction inclusion verification test and verify a piece of data is included in the blockchain. We assume that the verifier knows the block interval time and the confirmation number of the blocks $k$.

 The verifier sends the same request to several provers and they will respond to her with a proof. The verifier needs to determine which of the proofs belongs to an honest prover. The proof contains a chain of blocks. We assume that there exists at least one honest full node among the selected provers, i.e., the client has not been eclipsed from the network.
 
 Without loss of generality, we can assume that the verifier sends requests to two full nodes where one of them is honest and the other one is malicious. The verifier's goal is to determine which of them is the honest node. The honest node maintains the honest chain which is a blockchain with a specific genesis block that has maximum mining power working on it. In other words, it has the highest cumulative difficulty. The verifier wants to confirm some data from this honest chain. In real-world scenarios, the verifier can send requests to multiple provers to increase the probability of communicating with an honest node. 

 \subsection{Notations}\label{sec:notations}
 We denote a chain of length $n$ of block headers by $C[0:n-1]$. In this chain, $C[i]$ ($0\leq i\leq n-1$) refers to the block header with height $i$, and $C[i:j]$ refers to the set of block headers from height $i$ inclusive to height $j$ inclusive. Specifically, $C[i:end]$ is the set of block headers starting from $C[i]$ to the end of the chain.

 \section{LightSync Protocol}\label{sec:lightsync-protocol}
 In this section, we present the protocol of \textsc{LightSync}. The main properties of the \textsc{LightSync} protocol are threefold. First, the proposed protocol is very simple, and it can easily be implemented for any PoW blockchain without the need to change its consensus rule. Second, the computation power and the storage that a verifier needs have been dramatically reduced. The number of the blocks that a verifier needs to download from the full nodes does not depend on the length of the chain and is a constant factor. Third, \textsc{LightSync} uses an incentive mechanism to encourage the full nodes to engage in the protocol.
 
 \subsection{Protocol Overview}\label{sec:protocol-overview}
 As we discussed in the prover and verifier model in the previous section, we assume that we have one verifier and two provers where only one of the provers is honest.

 The verifier starts the protocol by sending a query to the provers. She creates a transaction and sends it to both of the provers. We call this transaction the \textit{query transaction}. We explain the details of this transaction in the incentive section (Section~\ref{sec:incentive-design}).
 After sending requests to the provers, the verifier waits to get responses. Every prover should respond with a proof before a predetermined deadline. The verifier compares the received proofs using \textsc{LightSync} algorithm. The winner proof belongs to the honest node with an overwhelming probability. 
 
 The proof consists of two parts. First, a proof to get the last finalized block. Second, a proof to find the last valid MMR root of the blockchain.
 Using the first part of the proof, the verifier gets the last finalized block of the honest chain. Furthermore, the second part of the proof provides information about the history of the blockchain in case that the verifier needs to verify some data based on the past blocks of the blockchain.
 
 In the following, we explain the details of each part of the proof.
 
 \subsection{Getting the Last Finalized Block Header}\label{part1}
 The verifier sends a \textit{query transaction} to both provers. Provers will try to include the \textit{query transaction} in their local blockchains and provide the verifier with a valid proof within the \textit{challenge period} to prove the honesty of their local blockchains. The \textit{challenge period} is determined by the verifier and is sent to the provers along with the \textit{query transaction}. The longer the \textit{challenge period} is, the higher certainty it provides for the verifier about its final decision. The \textit{challenge period} starts when the verifier sends the \textit{query transaction} to the provers. The steps of the protocol for finding the last finalized block header are described in Algorithm~\ref{alg1}.
 
 To include the \textit{query transaction} in its local blockchain, each prover broadcasts it into his local blockchain network. The honest prover propagates the transaction in the whole blockchain network so that the honest mining power working on the last block of the blockchain includes this transaction in their block.
 After the \textit{query transaction} is included in their local blockchains, the provers will wait for some block confirmations. The provers should send their proofs to the verifier before the \textit{challenge period} is passed. The proof consists of a sub-chain of block headers from the blockchain, as well as a Merkle inclusion proof of the \textit{query transaction} in one of those block headers. The sub-chain contains block headers from the block including the \textit{query transaction} to the last mined block on top of it that gets mined before the \textit{challenge period} is passed. Provers should include at least $k+1$ block headers in their proofs. If the number of blocks mined after the block containing the \textit{query transaction} is less than $k$, the provers should include the previous block headers to contain at least $k+1$ block headers in their proof.
 
 The honest prover can simply wait for the blockchain to grow and have some block confirmations, and send the proof to the verifier a few moments before the deadline. On the other hand, for the malicious prover, the honest mining power will not help him to grow his chain, as long as his local chain data contradicts the honest chain data. The malicious prover, therefore, has no more power than the adversary mining power to mine confirmation blocks on top of the block containing the \textit{query transaction}. We call the confirmation block headers along with the block header including the \textit{query transaction}, the \textit{challenge headers}. Since the adversary mining power is less than the honest mining power due to the underlying blockchain's security, in Section~\ref{sec:security}, we show that the \textit{overall difficulty} of the \textit{challenge headers} provided by the malicious prover in the \textit{challenge period} is less than the honest prover's with a very high probability.\\
 
 \textbf{Definition 1.} \textit{The \textit{overall difficulty} of a group of blocks is the summation of the inverse of the target difficulty of each block.}\\
 
 The verifier receives all the proofs provided within the \textit{challenge period}. Note that both the verifier and the provers should be online so that the verifier can examine whether the proof was provided within the \textit{challenge period} or not. 
 Each proof consists of a chain of block headers. First, the verifier checks the inclusion of her \textit{query transaction} in one of the block headers of the proof using the Merkle proof of inclusion. Then, she checks the validation of the rest of the block headers. Each block header should refer to its previous block header correctly and have valid PoW regarding its target difficulty. If a proof passes all the above checks, we consider it as a \textit{valid proof}. In the next step, the verifier decides which proof to accept by calculating all valid proofs' \textit{overall difficulties} and comparing them together. The valid proof with the highest \textit{overall difficulty} gets accepted. We call this proof the \textit{winner proof}.
 
 Since the malicious prover wants the verifier to accept his local chain, he tries to maximize the \textit{overall difficulty} of his proof. It means that he tries to mine as many blocks with high difficulties as possible, on top of the block containing the \textit{query transaction}. As long as the \textit{query transaction} gets determined by the verifier, the provers cannot start constructing the proof before they get the \textit{query transaction} from the verifier. Therefore, they only have a limited time (\textit{challenge period}) to construct the proof. During this limited time, there is a negligible chance that the malicious prover could achieve a higher \textit{overall difficulty} than the honest prover, because, the honest mining power is more than the adversary mining power in the network.
 
 The detailed analysis of the protocol is described in Section~\ref{sec:protocol-analysis}. After running the above protocol, the verifier has the last finalized block header of the blockchain. In the following, we explain how the verifier confirms data from the history of the blockchain.
 
 

\begin{algorithm}
\caption{LightSync Protocol for Finding the Last Finalized Block Header}
\label{alg1}
\begin{algorithmic}[1]
\STATE The verifier creates a query transaction and sends it to the selected provers along with the challenge period.
\STATE The verifier starts a timer for the challenge period.
\STATE Each prover who receives the query transaction runs \textsc{CreateProof}($queryTransaction$) (Procedure \ref{alg2}) and sends the created proof to the verifier.
\STATE For each received proof, the verifier runs \textsc{ValidateProof}($proof$) (Procedure \ref{alg3}).
\STATE The verifier runs \textsc{OverallDifficulty}($proof, queryTransaction$) (Procedure \ref{alg4}) for the valid proofs.
\STATE The verifier chooses the valid proof with the highest overall difficulty as the winner proof.
\end{algorithmic}
\end{algorithm}

 \begin{algorithm}\floatname{algorithm}{Procedure}
\caption{\textsc{CreateProof}(\textit{queryTransaction})}
\label{alg2}
\begin{algorithmic}[1]
\STATE Start a timer for the challenge period.
\STATE $\delta$ $\leftarrow$ the communication delay between the verifier and the prover
\STATE Checks the validity of $queryTransaction$
\STATE Broadcast $queryTransaction$ in the local blockchain
\REPEAT
\STATE On the local longest chain $B[0:end]$, find the block B[q] which includes $queryTransaction$
\STATE $MerkleProof$ $\leftarrow$ proof of inclusion of $queryTransaction$ in the $B[q]$
\STATE $m \leftarrow len(B[q:end])$
\IF {$m \geq k+1$} 
\STATE $C[0:m-1] \leftarrow B[q:end]$
\ELSE
\STATE $C[0:k] \leftarrow B[q-k+m:end]$
\ENDIF
\UNTIL{$2 \delta$ seconds remaining from the challenge period}
\STATE $proof$ $\leftarrow$ $C$ $\cup$ $MerkleProof$
\STATE \textbf{return} $proof$
\end{algorithmic}
\end{algorithm}
 
\begin{algorithm}\floatname{algorithm}{Procedure}
\caption{\textsc{ValidateProof}(\textit{proof})}
\label{alg3}
\begin{algorithmic}[1]
\STATE Check the length of the chain in the $proof$ to be at least $k+1$.
\STATE Verify the correctness of Merkle inclusion proof for the query transaction.
\STATE Check that each block header of the chain refers correctly to its previous block header.
\STATE Calculate the PoW for each block header of the chain and validate it against the target difficulty.
\STATE \textbf{return} $true$ if all the above checks pass
\end{algorithmic}
\end{algorithm}

\begin{algorithm}\floatname{algorithm}{Procedure}
\caption{\textsc{OverallDifficulty}(\textit{proof,  queryTransaction})}
\label{alg4}
\begin{algorithmic}[1]
\STATE Find the block header $C[q]$ containing $queryTransaction$
\STATE $T[i]$ := target difficulty of block header $C[i]$ ($q \leq i \leq end$)
\STATE $overallDifficulty$ $\leftarrow$ $\sum_{i=q}^{end} T[i]^{-1}$
\STATE \textbf{return} $overallDifficulty$
\end{algorithmic}
\end{algorithm}

 \subsection{Finding the Last Valid MMR Root}\label{part2}
 After executing the first part of the \textsc{LightSync} protocol, a verifier has got a copy of at least $k+1$ block headers of the blockchain. If the construction of the block headers of the blockchain includes the MMR root of all previous block headers of the blockchain, then knowing the last $k+1$ block headers, the client has access to the MMR root in a finalized block header of the blockchain. The client can perform a transaction inclusion test for transactions related to the past blocks of the blockchain using that MMR root. This is the case for some blockchains like Beam \cite{beam} and Grin \cite{grin} that include MMR root in their block header structure.
 
 However, if a blockchain does not support MMR, there exist three possible approaches for including MMR roots in the structure and using them: Hard fork, Soft fork, or Velvet fork. In the case of a hard fork, miners are required to include an MMR root in the block headers of the blockchain. Alternatively, in a soft fork, the MMR root is added to the new blocks in a way to stay backward compatible with the old blocks. For example, the MMR root can be included in a predetermined transaction like the coinbase transaction. This way the non-upgraded miners will accept the new blocks whether these blocks are created following the new rule or not. Nonetheless, the upgraded miners will follow the new rule and only accept blocks including a valid MMR root. A soft fork needs the majority of the miners to upgrade to the new rule. After including the MMR root in the structure of the block headers of a blockchain using either of the above two approaches, it can be used immediately to verify data from the history of the blockchain by the verifier. For example, Zcash \cite{zcash} has added MMR root to its block header structure using a hard fork named Heartwood. 
 
 The last approach is to use a velvet fork to add MMR roots to the block headers. In velvet forks, there is no need for a specific number of miners to upgrade. The upgraded miners will add the MMR root to the block headers in a backward-compatible way (like soft forks it can be included in the coinbase transaction data), but, they continue to accept blocks created by non-upgraded miners. This way the set of accepted blocks stays unchanged, meaning that every non-upgraded or upgraded miner will accept a block, whether it includes an MMR root or not. In this case, since no one is checking the validity of the MMR field, some of the blocks may contain invalid MMR roots. If the verifier accepts an invalid MMR root, the malicious prover has deceived the verifier to accept a predicate that contradicts the honest chain. To avoid invalid MMR roots getting accepted by the light clients, we append a small buffer of data to the MMR root that enables a voting mechanism for them. Applying this change, the upgrade remains backward compatible. Using this mechanism, we make sure that the verifier can detect a valid MMR root.
 
 The buffer's length is $\alpha$. Each bit of the buffer points to a preceding block and votes for its MMR root validity. In a block header, the last bit of its voting buffer refers to the last upgraded block, the bit before that refers to the second last upgraded block, and so on. The miner sets a bit to $1$ (\textit{accept vote}) if he believes that the corresponding block includes a valid MMR root and sets it to $0$ (\textit{reject vote}) otherwise.
 We assume that the majority of the upgraded miners are honest. This way if the majority of miners in a long enough sub-chain have voted $1$ for an MMR root, the verifier can make sure that MMR root is valid and by adding new block headers to it, she can easily construct the MMR root of the last finalized block by herself.
 
 The second part of the proof is sent by the prover to the verifier in the case of a velvet fork, to provide her with the latest valid MMR root. The verifier starts the second part of the protocol by sending the last finalized block header and a parameter called $\beta$ to the prover. In response to the verifier's request, the prover sends a sub-chain of block headers that includes $\alpha + \beta$ upgraded block headers, as well as each block's coinbase data with proof of its inclusion in the block. We call the first $\beta$ block headers of the sub-chain \textit{candidates}. In Section~\ref{sec:security}, we discuss how $\beta$ effects the protocol security. Algorithm~\ref{alg5} shows the steps of the second part of the protocol.

 First, the verifier validates the received proof. She checks the previous hashes of the block headers to be correct and the coinbase data inclusions to be valid. She also checks the extended proof to include the last finalized block header from the previous part. After checking the validity of the proof, the verifier extracts candidates' block headers. Each of the candidates has $\alpha$ voters vote for them, as long as every upgraded block header votes for previous $\alpha$ upgraded block headers. An MMR root among the candidates that has been accepted by more than half of its voters is a valid MMR root. The verifier will accept that MMR root and use it to obtain the last valid MMR root of the blockchain by appending the block headers that exist on top of it to its tree. By choosing the right $\beta$, there exists a valid MMR root among the candidates with an overwhelming probability. A detailed analysis of this voting mechanism and how to determine the parameter $\beta$ is discussed in Section~\ref{sec:protocol-analysis}.
 
 
 \begin{algorithm}
\caption{LightSync Protocol for Finding the Last Valid MMR Root}
\label{alg5}
\begin{algorithmic}[1]
\STATE The verifier sends an MMR request along with the last finalized block header $B[f]$ to a prover.
\STATE The prover creates a proof containing two parts:
\begin{enumerate}
    \item The last $s+1$ block headers up to $B[f]$, which is $B[f-s:f]$, that contains $\alpha+\beta$ upgraded block headers from his local chain $B[0:f]$.
    \item The coinbase data for each block header along with its proof of inclusion in that block header \textsc{proof}($CoinbaseData$).
\end{enumerate}
\STATE The prover sends the proof to the verifier.
\STATE The verifier runs \textsc{FindLastMMR}($B[f-s:f]$, $CoinbaseData[0:s]$, \textsc{proof}($CoinbaseData[0:s]$)) to find the last valid MMR root.
\end{algorithmic}
\end{algorithm}

\begin{algorithm}\floatname{algorithm}{Procedure}
\caption{\textsc{FindLastMMR}($C[0:s]$, $CoinbaseData[0:s]$, \textsc{proof}($CoinbaseData[0:s]$))}
\label{alg6}
\begin{algorithmic}[1]
\STATE Check that $C[s]$ is equal to $B[l]$.
\STATE Check that the number of upgraded block headers is equal to $\alpha+\beta$.
\STATE Check that each block header refers correctly to its previous block header.
\STATE Check the correctness of each coinbase data inclusion proof. 
\STATE $candidates[0:\beta-1]$ $\leftarrow$ first $\beta$ upgraded block headers of $C[0:s]$
\STATE $V[0:\beta-1]$ := the number of accept votes for each block header of $candidates[0:\beta-1]$
\STATE $validSet$ $\leftarrow$ the $candidates$'s block headers with more than $\lfloor\alpha/2\rfloor$ accept votes
\STATE $validRoot$ $\leftarrow$ the MMR root of the last block header of $validSet$
\STATE $v$ $\leftarrow$ the index of the block containing $validRoot$
\STATE $lastValidRoot$ $\leftarrow$ update the $validRoot$ by adding the block headers $C[v:s]$ to its tree
\STATE \textbf{return} $lastValidRoot$

\end{algorithmic}
\end{algorithm}

 \subsection{Staying Up-to-Date}\label{part3}
 In some applications, after running the \textsc{LightSync} protocol and getting the last finalized block header and the last valid MMR root of the blockchain, the verifier may want to receive the future block headers as soon as they are mined. In this case, the verifier can continue requesting block headers from the honest full node and check their validity and compliance with her current chain of block headers and add them to her local chain if the received data is valid. When a new block header gets finalized, the verifier can add it to the MMR root she maintains to achieve the last stable MMR root of the blockchain. Then, she can drop all the finalized block headers prior to it. In this way, she will always maintain a limited number of block headers; a sub-chain with length $k+1$ with some possible forks. However, she can perform any transaction inclusion or state verification by using the MMR root she maintains. 
 
 In other cases where the verifier doesn't need to learn about new block headers immediately as they are mined, she can repeat the \textsc{LightSync} protocol in long time intervals. In this way, instead of getting each block header in real-time, the client can update herself periodically. So, she will use fewer resources, but, will experience some delays.

 \section{Incentive Design}\label{sec:incentive-design}

 Full nodes maintain a copy of the whole blockchain and relay data in the network. In addition, they provide the verifiers with light client services which have some extra costs for them. The growing number of light clients makes the services' cost considerable for full nodes. Without a proper incentive mechanism, there is a lack of motivation for users to participate in the network as full nodes. In this section, we introduce \textsc{LightSync} incentive mechanism, which proposes a way to cover the costs of being a full node and prevents the light clients from free-riding. The previous light client solutions suffer from the lack of incentive in their protocols. With the increase of demand for light clients, there is a need to properly incentivize full nodes, so that they continue providing the verifiers with light client services.

 \textbf{Incentive Mechanism.}
 To start the protocol, the verifier sends a \textit{query transaction} to provers. This transaction is a Pay-To-Script-Hash (P2SH)\cite{p2sh}: a special type of transaction that transfers money from the sender of the transaction to the hash of a script which is determined by the sender. The script may require some conditions to be satisfied. Anyone who provides the script and satisfies the dictated conditions can spend the P2SH transaction.
 The verifier uses this kind of transaction to pay the prover for the service she gets. We call this fee the \textit{Service Fee} that compensates for the costs of the full nodes. The verifier also has to pay the transaction fee of the P2SH transaction.
 
 After executing the \textsc{LightSync} protocol, the verifier sends the script to the prover who has provided the final chosen proof. Using the script, the prover can spend the \textit{query transaction} and receive his fee. The service fee makes an incentive for honest provers to participate in \textsc{LightSync} and broadcast the \textit{query transaction} in the fastest time. If more honest nodes are incentivized to participate in \textsc{LightSync}, the chance of the user connecting to at least one honest prover gets higher. Also, the faster the honest prover propagates the \textit{query transaction} in the network, the sooner an honest miner includes it in a block.

 A summary of all the needed communication between the verifier and the provers is shown in Figure \ref{fig2}. After receiving the proofs, the verifier detects the honest prover and continues communicating with him to obtain the MMR and provide him with the solution for conditions of P2SH transaction. We call it the puzzle of P2SH transaction.

 The described mechanism has a challenge: the verifier might refuse to provide the prover with the script. In this case, although the prover has provided a service for the verifier, it hasn't received the service fee and the verifier has only paid a transaction fee.
 
 We assume that after executing the \textsc{LightSync} protocol, the communication between the verifier and the prover needs to be continued. Since the verifier is a light client, she wants to receive every new block header from the prover. If the verifier refuses to reveal the script, the prover will not provide her with new block headers anymore. In this case, the verifier needs to restart the \textsc{LightSync} protocol with another prover. This means that she needs to send a new \textit{query transaction} to the other provers. This costs an extra transaction fee for her. By setting the transaction fee greater than the service fee, it will demotivate the verifier to abandon the first honest prover. 

 \begin{figure}[!t]
  \includegraphics[scale=0.4]{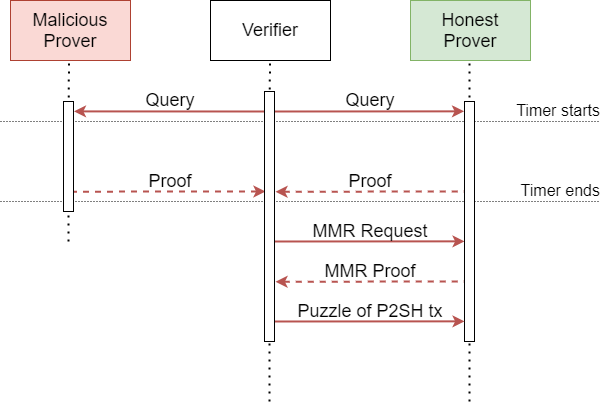}
  \centering
  \caption{Communication between the verifier and the provers}
  \label{fig2}
 \end{figure}
 
\section{Protocol Analysis} \label{sec:protocol-analysis}
 In this section, we perform a thorough analysis of the complexity and security of the \textsc{LightSync} protocol. We compare \textsc{LightSync} to the existing solutions in terms of needed resources and we show that it is secure.

\subsection{Complexity Analysis}\label{sec:complexity}

The main goal of a light client protocol is to enable light clients to verify the inclusion of some data in the blockchain using the least possible storage and computation resources.

In the first part of the protocol (\ref{part1}), each prover has a limited time (\textit{challenge period}) to provide the verifier with a proof. In \textsc{LightSync}, the \textit{challenge period} is determined by the desired level of security and is independent of the blockchain's length. The number of blocks mined in the honest chain during this period is on average the \textit{challenge period} divided by the block interval time of the blockchain, which stays constant. If MMR root is included in the block structure using a hard fork or a soft fork, after running the first part of the protocol, no more data is required from the prover. So, the verifier validates her predicate on the honest chain using constant resources.

If MMR structure has been added to the blockchain using velvet fork, after running the first part of the protocol, more data is required to validate the predicate. The winner prover sends a sub-chain that includes $\alpha+\beta$ upgraded block headers to the verifier. Assuming that the ratio of the total mining power to the upgraded mining power is $l$, in a long run, $1/l$ of total mined blocks of the honest chain are upgraded. So, on average, the winner prover needs to send $(\alpha+\beta)\times l$ block headers to the verifier to provide $\alpha+\beta$ upgraded block headers. Since $\alpha$, $\beta$, and $l$ do not depend on the length of the blockchain, the needed resources to perform the second part of the protocol (\ref{part2}) stays constant. 

In conclusion, no matter what the size of the blockchain is, the verifier can perform data inclusion verification using constant storage and computation resources. In Table~\ref{tab_comp}, we make a comparison between existing solutions and \textsc{LightSync}. The verifier's complexity in the table refers to the needed storage and computation resources for the verifier.

\begin{table}[!t]
\caption{Comparison of different light client protocols}
\label{tab_comp}
\centering
\begin{tabular}{|c|c|c|c|}
\hline
\textbf{Protocol} & \textbf{Protocol Complexity for Verifier} & \textbf{Added Structure}  \\ 
\hline
SPV & $O(n)$ & -\\ 
\hline
NIPoPoW  & $O(\polylog(n))$ & Interlink \\ 
\hline
FlyClient & $O(\polylog(n))$ & MMR \\ 
\hline
LightSync & $O(1)$ & MMR\\ 
\hline
\end{tabular}
\end{table}

\subsection{Security Analysis}\label{sec:security}
As we described in the prover and verifier model, the verifier sends queries to two provers where one of them is honest and the other one is malicious. Each prover sends a proof to the verifier. After executing the protocol, the verifier decides on one of the proofs as the winner proof. The \textsc{LightSync} protocol is secure if the verifier decides on the honest prover's proof with an overwhelming probability. In what follows, we explain the security conditions of the protocol and prove them.

We assume that the majority of the underlying blockchain miners are honest which is a necessary condition for Proof-of-Work blockchains. We assume the Poisson process model with rate $\lambda$ for the block mining process. 
In the case where MMR structure has been added to the blockchain using velvet fork (the approach described in \ref{part2}), we assume that the majority of the upgraded miners are honest.

Regarding the difficulty of the blocks, we assume that in the blocks that are sent as the first part of the proof, the difficulty of the blocks would change at most one time. This is a realistic assumption as long as the difficulty adjustment in a blockchain gets done in longer time intervals than the \textit{challenge period}.

The malicious prover wants to deceive the verifier and make her confirm some predicate that contradicts the honest chain data. The verifier validates her predicate using the last valid MMR root. So, the malicious prover should be able to convince the verifier to accept an invalid MMR root to succeed in deceiving the verifier. In the following, we analyze both cases where all the blocks include an MMR root (case I) and where the MMR root is added to some of them using a velvet fork (case II). The former is the case where the blockchain includes the MMR root in its structure (and so there is no need to do a fork), also, it is the case of using a hard fork or soft fork to add MMR to the structure. 

\textbf{Case I.} In this case where all blocks include an MMR root, the malicious prover should send an invalid block as a finalized block to be able to mislead the verifier, because, the last valid MMR root is included in the last finalized block header. If the malicious prover wants his invalid block to be the winner proof for the verifier, he has to include some confirmation blocks containing higher \textit{overall difficulty} than the other prover. Based on Theorem~\ref{theorem1}, the probability of this scenario is negligible under our model assumptions.

The probability of the query transaction being included in the first block mined after the \textit{challenge period} starts is determined by its transaction fee. The higher the transaction fee is, the higher will be the chance for it to be included in the blockchain sooner. We assume the transaction fee to be high enough so that the query transaction gets included in the first mined block after the challenge period starts. If the transaction fee is not high enough, the transaction will be included in later block headers of the honest chain, however, the adversary mining power will include it in the first block of its local chain anyways. Hence, there will be a higher chance for the malicious proof to have more overall difficulty and win.

The proof includes $m \geq k+1$ ($m' \geq k+1$) block headers, for the honest (malicious) prover. We denote this sub-chain  by $C[0:m-1]$ ($C'[0:m'-1]$). The \textit{query transaction} is included in one of the block headers of the proof, suppose the block header to be $C[q]$ ($C'[q']$). The verifier calculates the \textit{overall difficulty} of $C[q:end]$ and $C'[q':end]$. The proof with the higher \textit{overall difficulty} will be chosen as the \textit{winner proof}.

The target difficulty for each of these two sub-chains can change at most one time. We denote the number of block headers using the same target difficulty $T_1$ ($T'_1$) which are constructed in time $t_1$ ($t'_1$) starting from the beginning of the \textit{challenge period} by random variable $N_1$ ($N'_1$). In the remaining time $t_2$ ($t'_2$), blocks are constructed with target difficulty $T_2$ ($T'_2$). We denote the number of blocks constructed during this time with random variable $N_2$ ($N'_2$). Real numbers $t_1$, $t_2$, $t_1'$, and $t_2'$ are non-negative where $t_1+t_2$ and $t_1'+t_2'$ are equal to the \textit{challenge period} ($t$). Clearly, discrete random variables $N_1$, $N_2$, $N_1'$, and $N_2'$ have Poisson distributions.

We suppose the block arriving rate of the blockchain to be $\lambda_1$ blocks per second from the start of the \textit{challenge period} until $t_1$ seconds. Since the \textit{challenge period} is much less than the difficulty adjustment time interval, we can assume that the whole mining power of the network is constant during this time. Furthermore, with changing the target difficulty of the blockchain, the rate of arriving blocks will change to $\lambda_2$ after $t_1$ seconds. Parameters $\lambda_1'$ and $\lambda_2'$ are defined similarly for the adversary mining power. All the blocks of the adversary chain should meet their target difficulties, otherwise, the verifier would not accept their proof as a valid proof. However, the adversary mining power will not necessarily use the target difficulties of the honest chain. They may use other target difficulties in order to maximize their chance of being selected as the winner proof. Based on our model assumptions, the arrival rate of new blocks in a chain is proportional to the target difficulty for constant active mining power. 

\begin{thm}\label{theorem1} The verifier sends a \textit{query transaction} (\textit{tx}) to two provers where exactly one of them is honest. The honest prover running the \textsc{LightSync} protocol provides \textit{proof}, and the malicious prover provides \textit{proof'}. The probability 
\begin{equation}\label{ineq_1}
\begin{aligned}
\Pr\{\textsc{overallDifficulty}(proof, tx)\\ > \textsc{overallDifficulty}(proof', tx)\} 
\end{aligned}
\end{equation}
is negligible.
\end{thm}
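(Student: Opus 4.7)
The plan is to express each prover's overall difficulty as a weighted Poisson sum over the challenge period, compute the joint distribution of the pair, and then bound the probability of the comparison event via Chernoff-type estimates. Write $D_h := \textsc{overallDifficulty}(proof,tx)$ and $D_m := \textsc{overallDifficulty}(proof',tx)$. Since each prover's sub-chain spans at most two target-difficulty regimes within the challenge period of length $t$, I would decompose $D_h = N_1/T_1 + N_2/T_2$ with $N_i \sim \mathrm{Poisson}(\lambda_i t_i)$ on sub-intervals satisfying $t_1 + t_2 = t$, and analogously $D_m = N_1'/T_1' + N_2'/T_2'$ with $N_j' \sim \mathrm{Poisson}(\lambda_j' t_j')$.

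First, I would invoke the backbone-model identity that block-arrival rate times target difficulty equals the active mining power on the corresponding chain: $\lambda_i T_i = p_h$ for the honest prover and $\lambda_j' T_j' = p_m$ for the adversary, where $p_h > p_m$ by the honest-majority assumption. This pins down $\EX[D_h] = p_h t$ and $\EX[D_m] = p_m t$ regardless of the specific target-difficulty choices, so the adversary cannot inflate the expected overall difficulty of his proof above his mining budget. Second, I would compute the moment generating function of the independent difference $D_h - D_m$ (both $D_h$ and $D_m$ are weighted sums of independent Poissons) and apply the exponential Markov inequality to obtain a tail bound of the form $\exp\bigl(-\Omega(t \cdot (p_h - p_m)^2)\bigr)$ on the event stated in \eqref{ineq_1}. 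Choosing the challenge period $t = \Theta(\sigma)$ in the security parameter $\sigma$ then makes the resulting bound negligible in $\sigma$.

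The main obstacle is making the tail bound uniform in the adversary's adaptive choice of $(T_1', T_2', t_1')$. Although $\EX[D_m]$ is pinned to $p_m t$ regardless of that choice, the higher moments depend on $T_j'$, and a strategic adversary might try to inflate the variance of $D_m$ by picking very small target difficulties in order to produce a few high-weight blocks that could swing the comparison. I would control this by writing the MGF of $D_m$ directly in terms of $(T_j', t_j')$ and showing via standard Poisson-MGF calculations that the worst-case Chernoff exponent is attained at a single extremal strategy, which then yields the required uniform bound. A secondary subtlety is that the adversary need not respect the honest difficulty-adjustment schedule, so $t_1'$ is independent of $t_1$ and the two decompositions cannot be aligned segment-by-segment; this is handled by treating the two provers' chains as independent Poisson processes conditioned on the same challenge period $t$, so that the MGF of the difference factorises cleanly.
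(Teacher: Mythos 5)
Your core calculation is essentially the paper's own argument in different packaging. Bounding $\Pr\{D_m \ge D_h\}$ by exponential Markov applied to the difference of two independent weighted Poisson sums is exactly what the paper does: it conditions on $N_1,N_2,N_2'$, applies a Chernoff bound to $N_1'$ alone (its Lemma~1), and re-sums the Poisson series, which reproduces precisely $\EX[e^{sD_m}]\,\EX[e^{-sD_h}]$ with $s=mT_1'$; the optimized exponent $m_0$ and the handling of the two-regime decomposition with $t_1\neq t_1'$ are its Lemmas~2 and~3. Two small slips: the backbone relation is that mining power is proportional to $\lambda_i/T_i$, not $\lambda_i T_i$ (harmless, since you only use it to get $\EX[D_h]=p_h t$ and $\EX[D_m]=p_m t$, which is correct when ``difficulty'' means $1/T$), and the event in the theorem as printed has the honest proof on the larger side; like the paper's proof, you are in substance bounding the complementary event $\{D_m \ge D_h\}$.

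The genuine gap is your final uniformization step. With the adversary's mining power fixed, his block rate scales as $\lambda'\propto p_m T'$, so by choosing a very small target $T'$ he mines rare but very heavy blocks; the per-unit-time Chernoff exponent $\inf_{s>0}\{p_m T'(e^{s/T'}-1)+p_h T(e^{-s/T}-1)\}$ tends to $0$ as $T'\to 0$, so no bound of the form $\exp(-\Omega(t(p_h-p_m)^2))$ holds uniformly over $(T_1',T_2',t_1')$. This is not mere looseness of the bound: against a fixed challenge period $t$, the strategy $1/T'\approx c\,p_h t$ (with $c$ slightly above $1$) wins whenever the adversary mines a single block, which happens with probability about $1-e^{-p_m/(c\,p_h)}$ --- a constant independent of $t$. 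Hence the claim that ``the worst-case Chernoff exponent is attained at a single extremal strategy, which then yields the required uniform bound'' would fail as stated. The paper never attempts this uniformity: it treats the adversary's targets $T_1',T_2'$ (and hence $\lambda_1',\lambda_2'$) as fixed parameters and shows, for each fixed choice, that $T(m_0,t,t_1,t_1')$ decays as $t$ grows. To make your stronger, adaptive-adversary version go through you would need an extra ingredient --- e.g., the verifier rejecting headers whose declared targets deviate too far from the consensus difficulty, or forcing the adversary to commit to targets before $t$ is known --- otherwise this step of your plan does not hold.
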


\begin{proof}
Using the notations we defined in this section, we know that:
\begin{equation}
\begin{aligned}
\textsc{overallDifficulty}(proof, tx)
=\frac{N_1}{T_1}+\frac{N_2}{T_2}
\end{aligned}
\end{equation}
and
\begin{equation}
\begin{aligned}
\textsc{overallDifficulty}(proof', tx)
 =\frac{N_1'}{T_1'}+\frac{N_2'}{T_2'}.
\end{aligned}
\end{equation}

So, the probability (\ref{ineq_1}) is equal to
\begin{equation}\label{ineq_2}
\begin{aligned}
\Pr\{N_1'>N_1\frac{T_1'}{T_1}+N_2\frac{T_1'}{T_2}-N_2'\frac{T_1'}{T_2'}\}\\
=\sum_{n_1=0}^\infty\sum_{n_2=0}^\infty\sum_{n_2'=0}^\infty \big(\Pr\{N_1'>n_1\frac{T_1'}{T_1}+n_2\frac{T_1'}{T_2}-n_2'\frac{T_1'}{T_2'}\}\\
\times \Pr\{N_1=n_1\}\Pr\{N_2=n_2\}\Pr\{N_2'=n_2'\}\big).
\end{aligned}
\end{equation}

Due to symmetry and $N_1, N_2, N_1'$, and $N_2'$ being independent random variables, we can assume that $T_1\leq T_2$ and $T_1'\leq T_2'$ without loss of generality. In other words, the scenarios in which a prover has used a higher target difficulty at first and then decreased it, can be matched with scenarios in which the time intervals for each target difficulty remain the same but the lower target difficulty comes first. As long as the number of mined blocks, time intervals, and target difficulties remain the same and only their order of occurrence changes, the probability of the malicious prover winning remains the same. 

Now using Lemma~\ref{lemma1} and the fact that $N_1, N_2$, and $N_2'$ have Poisson distributions, we can show that the Probability~\eqref{ineq_2} is less than or equal to

\begin{equation}\label{expression}
\begin{aligned}
\sum_{n_1=0}^\infty\big(\frac{{\bigg(\frac{\lambda_1t_1}{e^{m\frac{T_1'}{T_1}}}\bigg)}^{n_1}}{n_1!}\big)
\times\sum_{n_2=0}^\infty\big(\frac{{\bigg(\frac{\lambda_2t_2}{e^{m\frac{T_1'}{T_2}}}\bigg)}^{n_2}}{n_2!}\big)\times e^{\lambda_1't_1'(e^m-1)}\\
\times\sum_{n_2'=0}^\infty\big(\frac{{\bigg(\frac{\lambda_2't_2'}{e^{-m\frac{T_1'}{T_2'}}}\bigg)}^{n_2'}}{n_2'!}\big)
\times e^{-\lambda_1t_1}\times e^{-\lambda_2t_2}
\times e^{-\lambda_2't_2'}
\end{aligned}
\end{equation}
for any $m>0$.

By using Taylor series of an exponential function and substituting $t_2$ and $t_2'$ with $t-t_1$ and $t-t_1'$ respectively, and letting $m=m_0=(\ln{\frac{T_1'\lambda_1}{T_1\lambda_1'}})/(1+\frac{T_1'}{T_1})$, it gets proved that the expression (\ref{expression}) is equal to $T(m_0,t,t_1,t_1')$ where function $T(m,t,t_1,t_1')$ is defined in Equation (\ref{define_T}). Given the definition of target difficulty and the honest mining power being more than the adversary mining power, we get $\frac{\lambda_1}{T_1}=\frac{\lambda_2}{T_2}>\frac{\lambda_1'}{T_1'}=\frac{\lambda_2'}{T_2'}$. Using Lemma~\ref{lemma3}, we conclude that $T(m_0,t,t_1,t_1')$ exponentially decreases by increasing $t$, therefore, by choosing a proper $t$ (which is the \textit{challenge period}) the result follows.
\end{proof}

\textbf{Case II.}
In this case, the probability of the verifier getting an invalid finalized block in the first part of the protocol is similar to the previous case which we proved is negligible. In the second part of the protocol, where the verifier aims to find the last valid MMR root, for the protocol to be secure, we need to prove two important statements:
\begin{itemize}
    \item The \textit{candidates} contain a block header including a valid MMR root with an overwhelming probability.
    \item After running the second part of the protocol, the verifier will decide on a block header among the \textit{candidates} including a valid MMR root with an overwhelming probability.
\end{itemize}

Given that a block mined by an upgraded honest node includes a valid MMR root, Theorem \ref{theorem2} results in the first statement. Based on Theorem \ref{theorem3}, the probability of an MMR root that is inconsistent with the honest chain to be chosen as valid MMR root by the verifier is negligible when $\alpha$ is chosen properly. It can be easily deduced that the probability of a valid MMR root getting rejected by the majority of the voters is the same and is negligible. Therefore, the second statement is concluded. To give an example for a proper $\alpha$, let's assume a blockchain's network where $M_a=\frac{1}{3}$ and let $\alpha=80$. Now, based on Equation (\ref{eq_s}), the probability of a wrong MMR root being chosen by the verifier is less than $0.01$.

Suppose the proportion of the upgraded honest mining power to be $M_h$ and the adversary mining power to be $M_a$. We have $M_h+M_a=1$. We assume that $M_h>M_a$.

\begin{thm}\label{theorem2}
Consider the candidates' block headers. The probability of none of them having been mined by an honest node is negligible by choosing the right $\beta$.
\end{thm}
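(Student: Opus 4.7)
The plan is to reduce the statement to a simple Bernoulli-trial computation exploiting the independence of block mining under the Poisson-process model, and then to choose $\beta$ so that an exponentially small tail bound becomes negligible.

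First I would argue that each candidate block header is, independently of the others, mined by an upgraded honest miner with probability $M_h$ and by an adversarial miner with probability $M_a$. In the backbone model, upgraded honest miners and adversarial miners are associated with independent Poisson processes with rates proportional to $M_h$ and $M_a$; by the superposition/thinning property, the merged stream of upgraded blocks is itself a Poisson process whose arrivals are independently labelled ``honest'' with probability $M_h/(M_h+M_a)=M_h$, using $M_h+M_a=1$. Since different blocks correspond to distinct random-oracle queries, these labels are mutually independent across candidates.

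Next I would apply this to the candidate set defined in Algorithm~\ref{alg6}. The $\beta$ candidates are, by construction, the first $\beta$ upgraded block headers of $C[0:s]$, so the event that none of them is honest is the intersection of $\beta$ independent Bernoulli events each of probability $M_a$, and therefore has probability exactly $M_a^{\beta}$. Since $M_h>M_a$ gives $M_a<1/2$, this bound decays exponentially in $\beta$; picking $\beta \geq \sigma/\log_2(1/M_a)$ yields $M_a^{\beta}\leq 2^{-\sigma}$, so a constant $\beta$ (independent of the chain length) already makes the failure probability negligible in any desired security parameter $\sigma$.

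The main obstacle is not computational but a modelling one: justifying that the ``honest vs.\ adversarial'' labels on consecutive upgraded blocks really are i.i.d.\ with the claimed parameter. This is exactly the superposition/thinning of independent Poisson processes that the backbone model supplies, so it is standard rather than hard. A minor subtlety worth addressing is whether the adversary could gain by selectively omitting the upgraded flag from some of its blocks; but any such behaviour can only decrease the adversarial share among upgraded candidates, so $M_a^{\beta}$ remains a valid upper bound on the failure probability and the theorem follows.
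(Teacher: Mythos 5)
Your proposal is correct and follows essentially the same route as the paper: both treat the $\beta$ candidates as independent Bernoulli trials that are honest with probability $M_h$, and bound the failure probability by $M_a^\beta$, which is negligible for large enough $\beta$. Your extra justification via Poisson thinning, the explicit choice $\beta \geq \sigma/\log_2(1/M_a)$, and the remark about the adversary suppressing its upgraded flag are useful elaborations but do not change the underlying argument.
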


\begin{proof}
The \textit{candidates} are $\beta$ block headers where each of them could have been mined by the honest or the adversary mining power. The probability of a block being mined by an honest miner is $M_h$. We know that the probability of blocks being mined by honest nodes is independent of each other. Therefore, the probability of the \textit{candidates} not including any honest block headers is ${M_a}^\beta$ which is negligible for a large enough $\beta$.
\end{proof}

To illustrate this probability, let us consider an example here for a blockchain network where $M_a=\frac{1}{3}$. Letting $\beta=7$, the above probability will be less than $0.0005$.

\begin{thm}\label{theorem3}
For $\alpha \rightarrow \infty$, the probability of an inconsistent MMR root ($R_{ic}$) getting more than $\lceil\frac{\alpha}{2}\rceil$ accept votes goes to zero.
\end{thm}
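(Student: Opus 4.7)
The plan is to reduce the theorem to a standard concentration inequality for a binomial random variable. First, I would observe that each of the $\alpha$ upgraded blocks voting on $R_{ic}$ was mined either by an honest upgraded miner or by the adversary. Since $R_{ic}$ is inconsistent with the honest chain, an honest upgraded miner computes the correct MMR root locally and will therefore cast a reject vote for $R_{ic}$. In the worst case we may assume that every adversarial voter casts an accept vote. Consequently, the number of accept votes received by $R_{ic}$ is upper-bounded by the number of adversarial voters among the $\alpha$ voters.

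Next, I would model the composition of the $\alpha$-block voting window. Under the independent mining model used earlier in the paper, each upgraded voter was mined by the adversary with probability $M_a$ and by an honest node with probability $M_h = 1 - M_a$, independently across voters. The number of adversarial voters is therefore a Binomial$(\alpha, M_a)$ random variable, call it $S_\alpha$. It thus suffices to show that
\[
\Pr\!\left[S_\alpha > \left\lceil \tfrac{\alpha}{2} \right\rceil\right] \to 0 \quad \text{as } \alpha \to \infty.
\]

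Because the hypothesis $M_h > M_a$ together with $M_h + M_a = 1$ yields $M_a < \tfrac{1}{2}$, the threshold $\alpha/2$ lies strictly above the mean $\alpha M_a$ of $S_\alpha$, so this is a standard large-deviation statement for a sum of i.i.d.\ Bernoulli variables whose mean is bounded away from $1/2$. I would finish by applying Hoeffding's inequality to obtain
\[
\Pr\!\left[S_\alpha > \tfrac{\alpha}{2}\right] \le \exp\!\left(-2\alpha\left(\tfrac{1}{2} - M_a\right)^2\right),
\]
which decays exponentially in $\alpha$ and therefore tends to $0$ as $\alpha \to \infty$. A multiplicative Chernoff bound would give an equivalent conclusion and also a quantitative $\alpha$ sufficient for any desired failure probability, matching the numerical example with $M_a = 1/3$ and $\alpha = 80$ quoted just above the theorem.

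The main obstacle is not the tail bound itself, which is textbook, but rather justifying the i.i.d.\ Binomial$(\alpha, M_a)$ model for the voters. One must argue that the adversary does no better than unconditionally voting accept on every adversarial slot, in particular that it cannot adaptively suppress or reorder its own blocks so as to inflate the effective adversarial fraction in the specific $\alpha$-block window used for $R_{ic}$, and that the mining-power proportions $M_h$ and $M_a$ really apply independently to each voter in that window. Given the paper's Poisson block-mining assumption and the assumption that the majority of upgraded miners are honest, these modeling steps follow directly, and the exponential decay from Hoeffding then completes the proof.
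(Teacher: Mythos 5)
Your proof is correct and follows the same core reduction as the paper's: both observe that an honest upgraded miner always casts a reject vote for an inconsistent root, so the accept-vote count is at most the number of adversarial miners among the $\alpha$ voters, which is modeled as Binomial$(\alpha, M_a)$, and both then bound the upper tail of that binomial. The only difference is the tail bound. You invoke Hoeffding's inequality to obtain $\exp\bigl(-2\alpha(\tfrac{1}{2}-M_a)^2\bigr)$, whereas the paper manipulates the binomial sum directly, pulling out $M_a^{\lceil\alpha/2\rceil}$ and bounding the remainder, to arrive at $(M_aM_h)^{(\alpha-1)/2}\,2^{\alpha-1}=(4M_aM_h)^{(\alpha-1)/2}$. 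The paper's elementary bound in fact has the optimal Chernoff exponent $\tfrac{1}{2}\ln\tfrac{1}{4M_aM_h}$, so it is marginally sharper than Hoeffding's additive form (at $M_a=\tfrac{1}{3}$, $\alpha=80$ the paper's bound gives roughly $0.0095$ while Hoeffding gives roughly $0.012$, which is why the paper can quote ``less than $0.01$''). Both decay exponentially in $\alpha$, so the qualitative conclusion is identical. Your closing caveat about justifying the i.i.d.\ binomial model for the $\alpha$-block voting window is well taken; the paper simply asserts the independence as a modeling assumption and does not address adaptive block suppression or reordering either.
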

\begin{proof}
As $R_{ic}$ is not consistent with the honest chain, the honest miners cast reject votes for it. At most, all the malicious miners cast accept vote for $R_{ic}$. The number of malicious miners in the next $\alpha$ upgraded blocks is the binomial random variable $X$ with parameters $p = M_a$ and $n = \alpha$. The probability of $X$ being greater than or equal to $\lceil\frac{\alpha}{2}\rceil$ is:
\begin{equation} \label{eq_s}
\begin{aligned}
\Pr(X \geq \lceil\frac{\alpha}{2}\rceil) = \sum_{i=\lceil\frac{\alpha}{2}\rceil}^{\alpha}{{\alpha}\choose i}(M_a^i)(M_h^{\alpha -i})\\
=M_a^{\lceil\frac{\alpha}{2}\rceil}\sum_{i=\lceil\frac{\alpha}{2}\rceil}^{\alpha}{{\alpha}\choose i}(M_a^{i-\lceil\frac{\alpha}{2}\rceil})(M_h^{\alpha-i})\\
< M_a^{\lceil\frac{\alpha}{2}\rceil}\sum_{i=\lceil\frac{\alpha}{2}\rceil}^{\alpha}{{\alpha}\choose i}(M_h^{\lfloor\frac{\alpha}{2}\rfloor})
\leq(M_aM_h)^{\frac{\alpha-1}{2}}\times 2^{\alpha-1}
\end{aligned}
\end{equation}

We know that
$M_a+M_h=1 \wedge M_a< M_h \Rightarrow M_aM_h<\frac{1}{4} \Rightarrow M_aM_h=2^{-(2+s)}$,
where $s>0$. Therefore, $\Pr(X \geq \lceil\frac{\alpha}{2}\rceil)<2^{-s\frac{(\alpha-1)}{2}}$,
where $s=-\log_2(M_aM_h)-2>0$. So, $\lim_{\alpha\rightarrow\infty} \Pr(X \geq \lceil\frac{\alpha}{2}\rceil)=0$.
\end{proof}

\section{Conclusion}

In this paper, we present \textsc{LightSync}, a low-cost light client protocol that is suitable for Proof of Work blockchains. Leveraging \textsc{LightSync}, a light client can verify the inclusion of data in the blockchain using constant computation and storage resources no matter what the length of the blockchain is. \textsc{LightSync} also introduces an incentive mechanism to compensate full nodes that provide data for the light clients. We describe the details of applying \textsc{LightSync} and perform a thorough security analysis of it. \textsc{LightSync} can be used for lightweight devices (e.g., mobile phones and IoT devices) that intend to connect to blockchains and verify data against them. Another important application of such a protocol is in cross-chain communication, in which one blockchain wants to verify the correctness of data against the other one. For future work, we are going to extend the \textsc{LightSync} protocol to support blockchains using other consensus mechanisms (e.g. Proof of Stake). Also, we will design a cross-chain bridge using the \textsc{LightSync}.

\appendix

In this appendix, we present several technical lemmas used in Section~\ref{sec:protocol-analysis}. The proofs of these lemmas are omitted due to the space constraint.

\section{Technical Lemmas}
\begin{lem}\label{lemma1} Suppose that $N_1'$ has a Poisson distribution with rate $\lambda_1't_1'$. We show that

\begin{equation}
\begin{aligned}
\Pr\{N_1'>n_1\frac{T_1'}{T_1}+n_2\frac{T_1'}{T_2}-n_2'\frac{T_1'}{T_2'}\}\\
\leq \frac{e^{(e^m-1)\lambda_1't_1'}}{(e^{mn_1\frac{T_1'}{T_1}})(e^{mn_2\frac{T_1'}{T_2}})(e^{-mn_2'\frac{T_1'}{T_2'}})}
\end{aligned}
\end{equation}

where $m,T_1,T_2,T_1',T_2'>0$ and $n_1,n_1',n_2,n_2'\geq0$.
\end{lem}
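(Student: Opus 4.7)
The plan is to prove this by applying a standard Chernoff-style bound to the Poisson random variable $N_1'$. Let me denote the threshold on the right-hand side of the probability by $c := n_1 \tfrac{T_1'}{T_1} + n_2 \tfrac{T_1'}{T_2} - n_2' \tfrac{T_1'}{T_2'}$; note that $c$ may be negative, but that only makes the bound easier since the resulting probability can only be smaller than $1$ (and we will see the right-hand side always exceeds whatever we need). For any fixed $m > 0$, the function $x \mapsto e^{mx}$ is strictly increasing, so the event $\{N_1' > c\}$ is identical to $\{e^{mN_1'} > e^{mc}\}$. This converts a tail event into an event about a nonnegative random variable, setting us up for Markov's inequality.

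Next I would apply Markov's inequality to get
\begin{equation*}
\Pr\{N_1' > c\} \;=\; \Pr\{e^{mN_1'} > e^{mc}\} \;\leq\; \frac{\EX[e^{mN_1'}]}{e^{mc}}.
\end{equation*}
The key input is the moment generating function of a Poisson random variable. Since $N_1' \sim \mathrm{Poisson}(\lambda_1' t_1')$, a direct summation of the probability mass function against $e^{mk}$ gives
\begin{equation*}
\EX[e^{mN_1'}] \;=\; \sum_{k=0}^{\infty} e^{mk}\,\frac{(\lambda_1' t_1')^k e^{-\lambda_1' t_1'}}{k!} \;=\; e^{\lambda_1' t_1' (e^m - 1)},
\end{equation*}
where the last equality is the Taylor expansion of $\exp(\lambda_1' t_1' e^m)$. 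Substituting this and the definition of $c$ into the Markov bound yields exactly
\begin{equation*}
\Pr\{N_1' > c\} \;\leq\; \frac{e^{(e^m - 1)\lambda_1' t_1'}}{e^{m n_1 T_1'/T_1} \, e^{m n_2 T_1'/T_2} \, e^{-m n_2' T_1'/T_2'}},
\end{equation*}
which is the stated inequality.

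There is no genuine obstacle in this argument; it is a textbook Chernoff bound for Poisson tails, and the only subtlety is cosmetic, namely that the threshold $c$ may be nonpositive, in which case the inequality still holds trivially because the Markov estimate is vacuous but valid. The reason this lemma is set up with a free parameter $m > 0$, rather than being optimized here, is that the proof of Theorem~\ref{theorem1} needs to plug in a specific choice $m_0$ later; our task is only to provide the parametric family of bounds, which the computation above accomplishes directly.
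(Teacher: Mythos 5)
Your proof is correct and follows essentially the same route as the paper's: both apply the Chernoff bound (Markov's inequality on $e^{mN_1'}$) and then substitute the Poisson moment generating function $\EX[e^{mN_1'}] = e^{(e^m-1)\lambda_1' t_1'}$. The only difference is that you spell out the Markov step and note the harmless possibility of a negative threshold, which the paper leaves implicit.
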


\begin{proof} For some $m>0$, using the Chernoff bound we have:
\begin{equation}
\begin{aligned}
Pr\{N_1'>n_1\frac{T_1'}{T_1}+n_2\frac{T_1'}{T_2}-n_2'\frac{T_1'}{T_2'}\}\\
\leq \frac{\EX[e^{mN_1'}]}{e^{mn_1\frac{T_1'}{T_1}+mn_2\frac{T_1'}{T_2}-mn_2'\frac{T_1'}{T_2'}}}
\end{aligned}
\end{equation}
Now, notice
\begin{equation}
\begin{aligned}
\EX[e^{mN_1'}]=\sum_{n_1'=0}^{\infty}e^{mn_1'}Pr\{N_1'=n_1'\}=e^{(e^m-1)\lambda_1't_1'}
\end{aligned}
\end{equation}
and the result follows.
\end{proof}

\begin{lem}\label{lemma2} Suppose that $T_1,T_1',\lambda_1,\lambda_1'>0$ and $\frac{\lambda_1}{T_1}>\frac{\lambda_1'}{T_1'}$. For function
\begin{equation}
\begin{aligned}
f(t)
=(e^{m_0}-1)\lambda_1't-\lambda_1t+\lambda_1e^{-m_0\frac{T_1'}{T_1}}t
\end{aligned}
\end{equation}
where $m_0=\frac{\ln{\frac{T_1'\lambda_1}{T_1\lambda_1'}}}{1+\frac{T_1'}{T_1}}$, we have $\lim_{t\rightarrow\infty}{e^{f(t)}}=0$.
\end{lem}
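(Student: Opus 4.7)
The plan is to reduce the statement to a sign question. Since $f(t) = c \cdot t$ is linear in $t$, where
\[
c = (e^{m_0}-1)\lambda_1' - \lambda_1 + \lambda_1 e^{-m_0 T_1'/T_1},
\]
the conclusion $e^{f(t)} \to 0$ as $t \to \infty$ is equivalent to the single inequality $c < 0$. The whole task therefore collapses to establishing the sign of this one constant built from the parameters.

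To do this, I would introduce the auxiliary function
\[
g(m) = \lambda_1'(e^m - 1) + \lambda_1\bigl(e^{-m T_1'/T_1} - 1\bigr), \qquad m \geq 0,
\]
so that $c = g(m_0)$, and study its shape. A direct second-derivative computation gives $g''(m) > 0$, so $g$ is strictly convex; and clearly $g(0) = 0$. The hypothesis $\lambda_1/T_1 > \lambda_1'/T_1'$ rearranges to $(T_1'/T_1)\lambda_1 > \lambda_1'$, which is exactly $g'(0) = \lambda_1' - (T_1'/T_1)\lambda_1 < 0$. Next I would verify, by solving $g'(m) = 0$ (which reduces after taking logarithms to a linear equation in $m$), that the unique critical point of $g$ is precisely the value $m_0 = \ln\bigl(T_1'\lambda_1/(T_1\lambda_1')\bigr)/(1 + T_1'/T_1)$ given in the statement. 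Convexity together with $g'(0) < 0$ then makes $m_0$ the unique global minimizer of $g$ on $[0,\infty)$, lying strictly in $(0,\infty)$. Since $g(0) = 0$ and $g$ strictly descends away from $0$ near the origin, we get $c = g(m_0) < 0$, whence $f(t) \to -\infty$ and $e^{f(t)} \to 0$.

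The main obstacle I anticipate is the algebraic identification of the prescribed $m_0$ with the minimizer of $g$; once that identification is visible, strict convexity closes the argument in one line. A useful sanity check is that $m_0$ is exactly the Chernoff-optimal exponent already used in Lemma~\ref{lemma1}, so $f(t)$ should be read as the logarithm of the best exponential upper bound the Chernoff method produces, and the hypothesis $\lambda_1/T_1 > \lambda_1'/T_1'$ (the honest difficulty-adjusted rate exceeding the adversarial one) is precisely what forces this optimized bound to decay rather than grow.
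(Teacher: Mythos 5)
Your proof is correct and follows essentially the same route as the paper: both define the same auxiliary function $g(m)$ so that the lemma reduces to $g(m_0)<0$, note $g(0)=0$ and $g'(0)<0$ from the hypothesis, and identify $m_0$ as the unique critical point of $g$. The only cosmetic difference is that you invoke strict convexity of $g$ to close the argument, whereas the paper reaches the same conclusion from the uniqueness of the critical point together with the sign of $g'(0)$.
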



\begin{proof}
It is sufficient to show
\begin{equation}
\begin{aligned}
\lim_{t\rightarrow\infty}{f(t)}=-\infty
\end{aligned}
\end{equation}
which is equivalent to showing that
\begin{equation}\label{ineq_g}
\begin{aligned}
\frac{f(t)}{t}<0
\end{aligned}
\end{equation}
We define
\begin{equation}
\begin{aligned}
g(m)=(e^{m}-1)\lambda_1'-\lambda_1+\lambda_1e^{-m\frac{T_1'}{T_1}}
\end{aligned}
\end{equation}
so, inequality (\ref{ineq_g}) is equivalent to $g(m_0)<0$.

Considering $\frac{\lambda_1}{T_1}>\frac{\lambda_1'}{T_1'}$, we get $g'(m)|_{m=0}<0$. We know $g(0)=0$, also, as the given $m_0$ is the only value for $m$ satisfying $g'(m)=0$, we conclude $g(m_0)<0$.
\end{proof}

\begin{lem}\label{lemma3}
Suppose that $T_2\geq T_1$, $T'_2\geq T'_1$, $t_1 + t_2 = t'_1 + t'_2$, $\frac{\lambda_1}{T_1} = \frac{\lambda_2}{T_2}>\frac{\lambda'_1}{T'_1} = \frac{\lambda'_2}{T'_2}$. Let $m_0=\frac{\ln{\frac{T_1'\lambda_1}{T_1\lambda_1'}}}{1+\frac{T_1'}{T_1}}$. Then, we have
\begin{equation}
\begin{aligned}
\lim_{t\rightarrow\infty}{T(m_0,t,t_1,t'_1)}=0
\end{aligned}
\end{equation}
where  
\begin{equation}\label{define_T}
\begin{aligned}
    T(m,t,t_1,t'_1) =  
    e^{{t_1}(-\lambda_1 + \lambda_2 + \lambda_1e^{-m\frac{T'_1}{T_1}}- \lambda_2e^{-m\frac{T'_1}{T_2}})}\\
    \times e^{t'_1(\lambda'_1e^m - \lambda'_2e^{m\frac{T'_1}{T'_2}} - \lambda'_1+ \lambda'_2)}\\
    \times e^{t(\lambda'_2e^{m\frac{T'_1}{T'_2}} -\lambda_2 - \lambda'_2 + \lambda_2e^{-m\frac{T'_1}{T_2}})}.
\end{aligned}
\end{equation}
\end{lem}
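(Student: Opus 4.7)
The plan is to write the exponent of $T(m_0, t, t_1, t'_1)$ as a linear function $A t_1 + B t'_1 + C t$ of its three time arguments, and to reduce the asymptotic problem to a statement already handled by Lemma~\ref{lemma2}. Concretely, let $A$, $B$, $C$ denote the coefficients of $t_1$, $t'_1$, $t$, respectively, in the definition of $T$ given in (\ref{define_T}). I aim to show that $A \geq 0$, $B \geq 0$, and $A+B+C$ equals a negative number already computed by Lemma~\ref{lemma2}; this suffices because then
\begin{equation*}
A t_1 + B t'_1 + C t \,\leq\, A t + B t + C t \,=\, (A+B+C)\,t \,\to\, -\infty
\end{equation*}
as $t\to\infty$ (using $t_1, t'_1 \in [0, t]$), so $T(m_0, t, t_1, t'_1) \leq e^{(A+B+C)t} \to 0$.

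Using the common-rate assumptions $\lambda_1/T_1 = \lambda_2/T_2 =: \mu$ and $\lambda'_1/T'_1 = \lambda'_2/T'_2 =: \mu'$, a short rearrangement expresses $A$ and $B$ as differences
\begin{equation*}
A = \mu\bigl[G(T_2) - G(T_1)\bigr], \qquad B = \mu'\bigl[H(T'_1) - H(T'_2)\bigr],
\end{equation*}
where $G(x) := x\bigl(1 - e^{-m_0 T'_1/x}\bigr)$ and $H(x) := x\bigl(e^{m_0 T'_1/x} - 1\bigr)$. Their derivatives simplify to $G'(x) = 1 - e^{-u}(1-u)$ and $H'(x) = e^u(1-u) - 1$ with $u = m_0 T'_1/x > 0$; the function $\varphi(u) = e^u(1-u)$ satisfies $\varphi(0) = 1$ and $\varphi'(u) = -ue^u < 0$ for $u > 0$, hence $\varphi(u) < 1$ for all $u > 0$. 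This makes $G$ strictly increasing and $H$ strictly decreasing on $(0, \infty)$. Combined with the hypotheses $T_2 \geq T_1$ and $T'_2 \geq T'_1$, we obtain $A \geq 0$ and $B \geq 0$.

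Next I would add $A + B + C$ and observe that the coefficients of $\lambda_2$ and $\lambda'_2$ cancel completely: the terms $\pm\lambda_2$ and $\mp\lambda_2 e^{-m_0 T'_1/T_2}$ appearing in $A$ and $C$ annihilate each other, and similarly $\pm\lambda'_2 e^{m_0 T'_1/T'_2}$ and $\mp\lambda'_2$ cancel between $B$ and $C$. What remains is
\begin{equation*}
A + B + C \,=\, \lambda'_1(e^{m_0} - 1) - \lambda_1\bigl(1 - e^{-m_0 T'_1/T_1}\bigr),
\end{equation*}
which is exactly the quantity denoted $g(m_0)$ in the proof of Lemma~\ref{lemma2} and shown there to be strictly negative at the prescribed choice $m_0 = \ln(T'_1\lambda_1/(T_1\lambda'_1))/(1 + T'_1/T_1)$. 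Plugging this into the bound from the first paragraph yields the desired limit.

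The main obstacle, or rather the main pleasant surprise that makes the approach work, is the cancellation in $A+B+C$: the ``second-difficulty'' terms tied to $T_2$ and $T'_2$ vanish, leaving only the ``first-difficulty'' expression already handled by Lemma~\ref{lemma2}. Without this cancellation one would need to re-run the Lemma~\ref{lemma2} argument for a perturbed parameter, which would require a more delicate optimization in $m$; exploiting the specific $m_0$ chosen in Lemma~\ref{lemma2} is what keeps the proof short. The remaining ingredients — non-negativity of $A, B$ via monotonicity of $G$ and $H$ — are elementary calculus.
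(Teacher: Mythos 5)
Your overall strategy is essentially the paper's: decompose the exponent as $A t_1 + B t'_1 + C t$, show $A\geq 0$ and $B\geq 0$ so the exponent is bounded by $(A+B+C)\,t$, observe that the $\lambda_2$ and $\lambda'_2$ terms cancel so that $A+B+C = \lambda'_1(e^{m_0}-1)-\lambda_1(1-e^{-m_0 T'_1/T_1}) = g(m_0)$, and invoke Lemma~\ref{lemma2}. The paper does exactly this, with $A$ and $B$ called $f_1(m)$ and $f_2(m)$; the only variation is in how non-negativity of $A,B$ is argued. The paper differentiates $f_1,f_2$ in $m$ and uses $f_i(0)=0$, whereas you rewrite $A=\mu[G(T_2)-G(T_1)]$ and $B=\mu'[H(T'_1)-H(T'_2)]$ and differentiate in the difficulty variable $x$; both are elementary and equivalent. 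One small slip to fix: with $u=m_0T'_1/x$ the correct derivative is $G'(x)=1-e^{-u}(1+u)$, not $1-e^{-u}(1-u)$. This does not break the argument, since $e^{-u}(1+u)<1$ for $u>0$ (equivalently $e^u>1+u$), so $G$ is still strictly increasing and $A\geq 0$ follows; but the function you actually need to analyze for $G$ is $\psi(u)=e^{-u}(1+u)$, which plays the role that $\varphi(u)=e^u(1-u)$ plays for $H$. With that correction the proof is complete and matches the paper's.
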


\begin{proof}
Let's define the below functions
\begin{equation}
\begin{aligned}
f_1(m) = -\lambda_1 + \lambda_2 + \lambda_1e^{-m\frac{T'_1}{T_1}}- \lambda_2e^{-m\frac{T'_1}{T_2}}
\end{aligned}
\end{equation}
\begin{equation}
\begin{aligned}
    \\f_2(m) = \lambda'_1e^m - \lambda'_2e^{m\frac{T'_1}{T'_2}} - \lambda'_1+ \lambda'_2
\end{aligned}
\end{equation}

The derivative of $f_1$ is equal to
\begin{equation}
\begin{aligned}
f_1'(m) = -\lambda_1\frac{T'_1}{T_1}e^{-m\frac{T'_1}{T_1}}
+\lambda_2\frac{T'_1}{T_2}e^{-m\frac{T'_1}{T_2}} 
\\=\frac{\lambda_1}{T_1}T'_1(e^{-m\frac{T'_1}{T_2}} - e^{-m\frac{T'_1}{T_1}})
\end{aligned}
\end{equation}

As $T_2 \geq T_1$, we get $f_1'(m) \geq 0$ for $m>0$ and since $f_1(0) = 0$, we conclude $f_1(m) \geq 0$ for $m>0$.

The derivative of $f_2$ is equal to
\begin{equation}
\begin{aligned}
    f_2'(m) = \lambda'_1e^m -\lambda'_2\frac{T'_1}{T'_2}e^{m\frac{T'_1}{T'_2}} = \lambda'_1(e^m-e^{m\frac{T'_1}{T'_2}})
\end{aligned}
\end{equation}

As $T'_2 \geq T'_1$, we get $f_2'(m) \geq 0$ for $m>0$ and since $f_2(0) = 0$, we conclude $f_2(m) \geq 0$ for $m>0$.

Given that $t \geq t_1$, $t \geq t'_1$, $f_1(m) \geq 0$, and $f_2(m) \geq 0$ for $m>0$, we get
\begin{equation}
\begin{aligned}
    T(m,t,t_1,t'_1) \leq e^{t(f_1(m))}e^{t(f_2(m))}e^{t(\lambda'_2e^{m\frac{T'_1}{T'_2}} -\lambda_2 - \lambda'_2 + \lambda_2e^{-m\frac{T'_1}{T_2}})}
    \\= e^{t((e^{m} -1)\lambda'_1-\lambda_1 + \lambda_1e^{-m\frac{T'_1}{T_1}})}
\end{aligned}
\end{equation}

 Based on Lemma \ref{lemma2}, for $m=m_0$ the above expression goes to zero when $t \rightarrow \infty$.
\end{proof}


\bibliographystyle{IEEEtran}
\bibliography{references}

\end{document}